\newtheorem{remark}{Remark}
\newtheorem{proposition}{Proposition}
\tikzset{
   node distance=8em, on grid,
   vertex/.style={circle, draw},
   every picture/.style={auto, thick},
}
\crefname{equation}{expression}{expressions}
\crefname{Equation}{Expression}{Expressions}
\journal{Information Sciences}
\newcommand{\mc}[3]{\multicolumn{#1}{#2}{#3}}
\begin{document}
   %\tableofcontents
   %\newpage

   \begin{frontmatter}
      \title{Extrema Analysis of Node Centrality\\in Weighted Networks}

      \author{Roger S. Passos}
      \ead{roger.passos@aluno.cefet-rj.br}
      \author{Douglas O. Cardoso\corref{ca}}
      \cortext[ca]{Corresponding author}
      \ead{douglas.cardoso@cefet-rj.br}
      \ead[url]{https://docardoso.github.io/}
      \address{Department of Computer Engineering,\\Federal Center for Technological Education Celso Suckow da Fonseca (CEFET-RJ), Petrópolis, RJ, Brazil}

      \begin{abstract}
         A very interesting matter of Network Science is assessing how complex a given network is.
         In other words, by how much does such a network departs from any general patterns which could be evoked for its description.
         Among other choices, these patterns can be defined in terms of node or edge properties, as one of the various modes of centrality.
         Although there are centrality indexes defined for weighted graphs, the discussions on this subject are far from over, especially because the influence of edge weights in this regard can vary not only in form but also in intensity, a still incipient approach so far.
         For the aforementioned complexity evaluation, the situation is even more acute:
         to the best of our knowledge, this was never addressed in the literature, from the standpoint of centrality in weighted graphs.
         This paper details in a colorful fashion a sound methodology covering both topics, as well as experiments confirming its practical applicability and future trends in this context.
      \end{abstract}

      \begin{keyword}
         Adjustable Centrality \sep Network Complexity \sep Line-line Intersections
         \MSC[2010] 00-01\sep  99-00
      \end{keyword}

   \end{frontmatter}

   %\linenumbers
   \section{Introduction \label{sec:intro}}
      Centrality assessment is a branch of study that emerged from complex network analysis and, in short, targets to indicate how important nodes of a network are to its structure.
      In practical terms, this can mean identifying influential people in online social applications~\cite{RiquelmeC16}, analyzing good routes in urban transportation networks~\cite{crucitti2006centrality}, or even understanding which proteins can be considered essential for protein-protein interactions in a cell~\cite{jeong2001lethality}.
      Three centrality measures are considered the most classic:
      degree, closeness, and betweenness.
      They were originally developed for networks in which relationships are binary, i.e., there is or not a connection between any pair of vertices~\cite{freeman1978centrality}.
      But there are networks with numerical weights attached to the connections, quantifying some characteristic of these:
      e.g., distance, intensity, capacity, or cost.
      Likewise, several centrality measures also have been proposed with the latter type of network in mind.

      Some of these~\cite{OpsahlAS10} were proposed with the objective of incorporating concomitantly the number of connections -- as originally intended in this regard -- and the weights associated with them, allowing to regulate the relative importance given to each of these attributes.
      This dual description is relevant since in graphs that represent internet traffic, for example, for paths that have the same distance or cost, the quality of resources flowing through more intermediate nodes tends to be worse than with fewer of them.
      The measure, then, allows setting this kind of situation.
      Moreover, there can be several other situations in which such a tuning makes centrality assessment more assertive.
      Thus, the present work sought to extend such previous developed adjustable centrality generalizations to weighted networks, trying to obtain more polished and secure understanding of the information they provide.

      This paper details multiple contributions to the literature in the field of the just mentioned subject.
      The first of which is the establishment of adjustable centrality measures based on the logarithm of the ratio between weighted and unweighted centralities, which possess quite interesting properties.
      This serves as the cornerstone to the evaluation of how wide is the range encompassing all possible node rankings defined for a perspective of centrality varying the importance of edge weights in this regard, which can be interpreted as proxy to the assessment of weights relevance to network structure.
      One last part of this whole is an algorithm to realize such a evaluation without relying on a brute-force approach, as well as a formal proof of its predicates.

      The following sections, which complete this research report, are enumerated and summarized next.
      \Cref{sec:basic} establishes the theoretical foundations on which this work was built.
      \Cref{sec:over} provides a brief systematization and analysis of previously proposed node centrality measures which allow edge weights influence on their computations to be calibrated.
      An alternative methodology to such indexes which overcomes some of their weakness is introduced in \cref{sec:improv}.
      While \cref{sec:assess} shows how this methodology can be extended to allow analyzing weighted networks from so far unexplored point of view, \cref{sec:inter} formally describes how this can be efficiently realized.
      \Cref{sec:eval} details the application of the herein proposed ideas in a variety of practical scenarios.
      At last, \cref{sec:end} brings some finishing remarks and points possible continuations.

   \section{Basic Concepts and Notation \label{sec:basic}}
      A (simple, undirected) \emph{graph} or \emph{network} $G=(V, E)$ is a structure defined by the elements of set $V$, called \emph{vertices} or \emph{nodes}, which may have pairwise relationships, named \emph{edges, ties} or \emph{links}, that compose the set $E \subseteq \binom{V}{2}$.
      %Consider $n=|V|$ as the number of vertices and $m=|E|$ as the number of edges, and that 
      A node $u$ is \emph{adjacent} to a node $v$ if there is the edge $e=\{u, v\} \in E$.
      A \emph{path} between nodes $u$ and $v$ is a non-repeating sequence of nodes starting with $u$ and ending with $v$, or vice-versa, in which every two neighboring entries in the sequence are adjacent nodes in the network.
      The number of edges traversed in a path is its \emph{length}.
      If there is a path between every pair of vertices, the graph is \emph{connected}.
      %Unless indicated otherwise, consider any graph henceforth to be connected.
      There are also \emph{weighted} or \emph{valued} graphs, whose edges have a numerical value associated to each of them.
      These can be denoted as $G=(V,E,w)$, where $w : \binom{V}{2} \to \mathbb{R}$.
      In numerous situations such values are strictly positive, what is also assumed in the remainder of this text.
      As a consequence, it can be conveniently assumed that if $e \not \in E$, then $w(e) = 0$.

      The \emph{degree} centrality refers to the number of edges incident upon a node:
      \begin{equation}\label{DC} k_u=C_D(u)=|\{e \in E : u \in e\}|\ . \end{equation}
      This index covers only the influence of the nodes from a local perspective on the graph.
      On the other hand, the \emph{closeness} centrality is the inverse of the total geodesic distance of one node to all the others (relying on the assumption that the graph is connected, which is not strictly necessary~\cite{WF1994}).
      Between two nodes $u$ and $v$ the distance $dist(u, v)$ is the length of the shortest path between $u$ and $v$.
      The closeness, then, is formalized as:
      \begin{equation}\label{CC} C_C(u)=\left[\sum\limits_{v \in V}dist(u,v)\right]^{-1}\ . \end{equation}
         
      The two Freeman’s measures~\cite{freeman1978centrality} just presented were originally proposed for unweighted graphs.
      Based on them some generalizations were developed with valued networks in mind.
      In this regard, Barrat et al.~\cite{barrat2004architecture} extended the notion of degree of a node to what was called \emph{strength}, defined by the sum of the weights of all edges incident in it:
      \begin{equation}\label{WDC} s_u=C_D^{w}(u)=\sum\limits_{\{u, v\} \in E} w(\{u, v\})\ . \end{equation}
      Newman~\cite{newman2001scientific} introduced the closeness to weighted networks based on Dijkstra's shortest path algorithm~\cite{Dijkstra59}.
      In this method, the length of a path is the sum of the weights of each tie in it.
      The minimum length of a path between nodes $u$ and $v$ taking weights into account is denoted by $dist_w(u, v)$.
      With the notion of distance aligned with this approach, the weighted closeness centrality is:
      \begin{equation}\label{WCC} C_C^{w}(u)=\left[\sum\limits_{v \in V} dist_w(u,v)\right]^{-1}\ . \end{equation}
      It should be noted that Dijkstra's algorithm was designed for graphs in which weights have a negative connotation, i.e., the shortest paths have the least accumulated weight.
      Thus, when the weights concern positive relations (e.g., affinity, capacity, similarity), their multiplicative inverses are commonly used for shortest path evaluation~\cite{OpsahlAS10}.
      
      To evaluate or compare non-trivial properties of networks, random graphs~\cite{Bollobas11} with specific characteristics can be used.
      A key model in this sense is the Erd\H os-Rényi (ER)~\cite{erdos59a}, whose notation $G(n,p)$ represents a graph that has $n$ vertices and $p$ is the probability of each pair of vertices to be connected by an edge.
      Although edge weights are not incorporated by the original ER model, it has been taken as the basis of others which include this feature.
      Given an ER graph, the possibly simplest way to obtain its weighted version is to assign to each link a value randomly defined according to an uniform probability distribution~\cite{kalisky2006scale}, for example.
      Another ER-based weighted model, with quite interesting properties, is the WRG~\cite{Garlaschelli_2009}, implemented by having the edge weights assigned according to a Geometric distribution which uses the value of $1-p$ as its parameter.
      
      The random models just referred to, especially the ER, are sometimes seen as lacking practical usefulness since their properties are not aligned with those observed in some remarkable real-world networks, considered as ``more complex''~\cite{kim2008complex}.
      Accordingly, alternatives that keep some features of a given network and randomize other aspects of it have been developed.
      A method that preserves node degrees while alters network topology is based on pairwise edge switching~\cite{milo2003uniform}.
      It consists of performing the following procedure a sufficient number of times:
      randomly choose two edges $\{a,b\}$ and $\{c,d\}$;
      then swap their ends to form the edges $\{a,d\}$ and $\{b,c\}$. %, ensuring that no multiple edges or loops are created.
      A weighted version of this idea was previously used~\cite{opsahl2008prominence}, which preserved edge weights distribution but not node strengths or network topology.
      Other models that only shuffle the edge weights, conserving topology~\cite{barrat2004architecture,opsahl2008prominence}, or that randomize node degrees and edge weights while the node strengths remain unchanged~\cite{Ansmann_2011} are also known.
      %We will use these models to evaluate to what extent a measure, proposed in the following sections, is determined by the distribution of local properties, comparing the value obtained for real-world networks and its random surrogates.

   \section{An Overview of Adjustable Centrality Indexes \label{sec:over}}
      The study of topological properties of complex networks has evolved over time and is ultimately important because these attributes are responsible for defining their functionality.
      For example, the structure of a social network affects the propagation of information or diseases in it, and the topology of a electrical grid changes supply robustness and stability~\cite{strogatz2001exploring}.
      Often, individual nodes assume a key function in this regard (e.g., people with great influence, large power stations) and, consequently, it is desirable to identify these nodes.
      That is the purpose of the centrality measures~\cite{DasSP18}.
      The versions of degree and closeness centralities presented in the \cref{sec:basic} take into account a single feature of the connections to determine node centrality:
      the number of links or, leaving aside the former, the sum of edge weights.
         
      More recently, some centrality indices for weighted networks have approached this goal of incorporating both tie features simultaneously.
      Laplacian centrality~\cite{QiFWWZ12} and the WKPN algorithm~\cite{TangSDMDH20} have tried to merge local and global characteristics of the vertices to assess their importance.
      The former is based on 2-walks in which a node participates.
      The latter considers single-source shortest paths with $K$ edges, where the analyzed node is the spread source in a infectious disease model.
      The h-degree index~\cite{ZhaoRY11} of a node displays only a local perspective and is defined as the greatest value $d_h$ so that a node has at least $d_h$ edges and the weight of each of them is greater than or equal to $d_h$.

      Compared to the previous ones, the measures proposed by Opsahl et al.~\cite{OpsahlAS10} are more adaptable, in terms of balancing the relative importance given to edge weights and counts.
      Such a statement is based on their use of a tuning parameter $\alpha$.
      Aiming at a local perspective, the degree~\eqref{DC} and strength~\eqref{WDC} of a node are combined to define a new version of degree centrality as follows:
      \begin{equation}\label{GeomeanWDC} C_D^{prod,\alpha}(u)={k_u}^{(1-\alpha)} {s_u}^{\alpha}\ . \end{equation}
      A similar approach was later applied to the weighted k-shell decomposition method~\cite{GarasSH12}.
      In addition, some of the subsequent works include the analysis of node centrality in an online social network under the variation of $\alpha$~\cite{yustiawan2015degree}, and an attempt to find an optimal value for such a parameter~\cite{wei2012degree}.
      An alternative closeness centrality, however, is based on weighted distances calculated on the graph obtained from the original by raising each edge weight to $\alpha$, represented by $dist_\alpha(u, v)$:
      \begin{equation}\label{ExpsumWCC} C_C^{sum,\alpha}(u)=\left[\sum\limits_{v \in V} dist_\alpha(u,v)\right]^{-1}\ . \end{equation}
      For both measures, when $\alpha=0$, the same results of Freeman's original measures are produced, disregarding weights.
      Whereas when $\alpha=1$, the outcome corresponds to the generalizations for valued networks. 

      %\mapa{\subsection{Medidas que espelham aquelas do Opsahl: expsum DC e geomean DC}}
      Besides the intrinsic differences regarding node reach in the network, Opsahl’s measures differ with respect to the type of summarization~\cite{BorgattiE06} each of them employ.
      In \cref{GeomeanWDC}, both degree and strength are calculated in their original form, and the tuning parameter used to offset the product of these values, similarly a geometric mean.
      In \cref{ExpsumWCC}, $\alpha$ is used to directly alter edge weights as a power to their exponentiation, which are then summed for shortest path computations, resembling the p-norm of a vector.
      Targeting to clarify such a discrepancy, these measures are herein identified with labels \emph{prod} (product) and \emph{sum}, instead of as originally denoted:
      $C_D^{w\alpha}(u)$ and $C_C^{w\alpha}(u)$, respectively.

      Moreover, such an approach favors considering other combinations of the available parts.
      Therefore alternatives to Opsahl's degree and closeness centralities could be defined as follows, respectively:
      \begin{align}
         C_D^{sum,\alpha}(u)  & = \sum\limits_{\{u, v\} \in E} [w(\{u, v\})]^{\alpha}\ ; \label{ExpsumWDC} \\
         C_C^{prod,\alpha}(u) & = {C_C(u)}^{(1-\alpha)} {C_C^{w}(u)}^{\alpha}\ .  \label{GeomeanWCC} 
      \end{align}
      Both pairs of adjustable degree and closeness centralities produce the same results when $\alpha$ is 0 or 1.
      \Cref{tab:measures_interpr} provides a qualitative description of the behavior of the measures when the parameter is outside the range defined by the just mentioned benchmark values.
      \begin{table}[ht]
         \small
         \begin{tabular}{c p{.4\textwidth} p{.4\textwidth}}
            \toprule
            \multirow{2}{*}{Centrality} & \multicolumn{2}{c}{Summarization} \\
            \cmidrule{2-3}
            & \multicolumn{1}{c}{Product} & \multicolumn{1}{c}{Sum} \\
            \midrule
            Degree
            &
            For $\alpha>1$, central nodes would have great aggregated edge weight resulting from a small number of ties.
            When $\alpha < 0$, the more edges and the lower the node strength, the better, which is useful when weights have a negative connotation.
            &
            Heavy edges are prioritized when $\alpha > 1$:
            as $\alpha$ tends to infinity, the nodes attached by the heaviest edges would become the most central ones.
            This preference is directly inverted for $\alpha < 0$, making lightest edges the most important.
            \\
            \hline
            Closeness 
            &
            For $\alpha > 1$, central vertices are those whose shortest paths to others generally have low accumulated weight but numerous edges. 
            With a negative parameter, the preferred vertices are whose distances to the others, disregarding the weights, are short, while discouraging the ones whose total weighted distance is small.
            &
            When $\alpha$ is negative, the central vertices would be those whose shortest paths to others have few, heavy edges.
            The opposite is true for $\alpha > 1$:
            shortest paths with numerous weaker ties are less affected in this scenario, favoring vertices whose shortest paths have such characteristics.
            \\
            \bottomrule
         \end{tabular}
         \caption{
            Effects on degree and closeness centrality assessment of \emph{prod} and \emph{sum} adjustment methods when $\alpha$ is out of the unit interval.
            \label{tab:measures_interpr}
         }
      \end{table}

      Now consider the graph presented in \cref{fig:opsahl_graph}.
      In order to exemplify the distinguishable behavior of measures with distinct types of summarization, \cref{fig:exp_measures} shows how the node rankings derived from \emph{prod} \eqref{GeomeanWDC} and \emph{sum} \eqref{ExpsumWDC} degree centralities vary according to the value assigned to $\alpha$.
      It is interesting to remark that the majority of rank changes occur outside the unit interval, suggesting that analyses restricted to it may not take full benefit of these indexes.
      And as a demonstration of their unique aspects, it can be seen that node $C$ leaves the last place as $\alpha$ grows in the \emph{prod} method -- while in \emph{sum} it occupies the same position in the entire interval -- indicating its greater accumulated strength, concerning the number of ties, compared to the others.
      Node $A$ in the \emph{sum} measure, however, occupies the second position at both extremes since its links have both the minimum and maximum edge weights.
      Despite their particularities, in specific conditions the measures may coincide, as indicated in \cref{prop:eq_prod_sum}.

      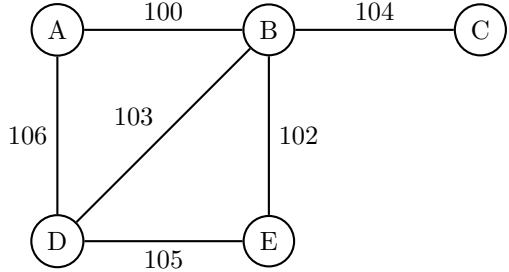
\begin{figure}[ht]
         \centering
         \begin{tikzpicture}
            \node [vertex] (A) {A};
            \node [vertex, right=of A] (B) {B};
            \node [vertex, right=of B] (C) {C};
            \node [vertex, below=of A] (D) {D};
            \node [vertex, below=of B] (E) {E};
            
            \draw (A) to node {100} (B);
            \draw (D) to node {106} (A);
            \draw (B) to node {104} (C);
            \draw (D) to node {103} (B);
            \draw (B) to node {102} (E);
            \draw (E) to node {105} (D);
            
         \end{tikzpicture}
         \caption{A sample graph. \label{fig:opsahl_graph}}
      \end{figure}
      
      \begin{figure}[ht]
         \centering
         \includegraphics{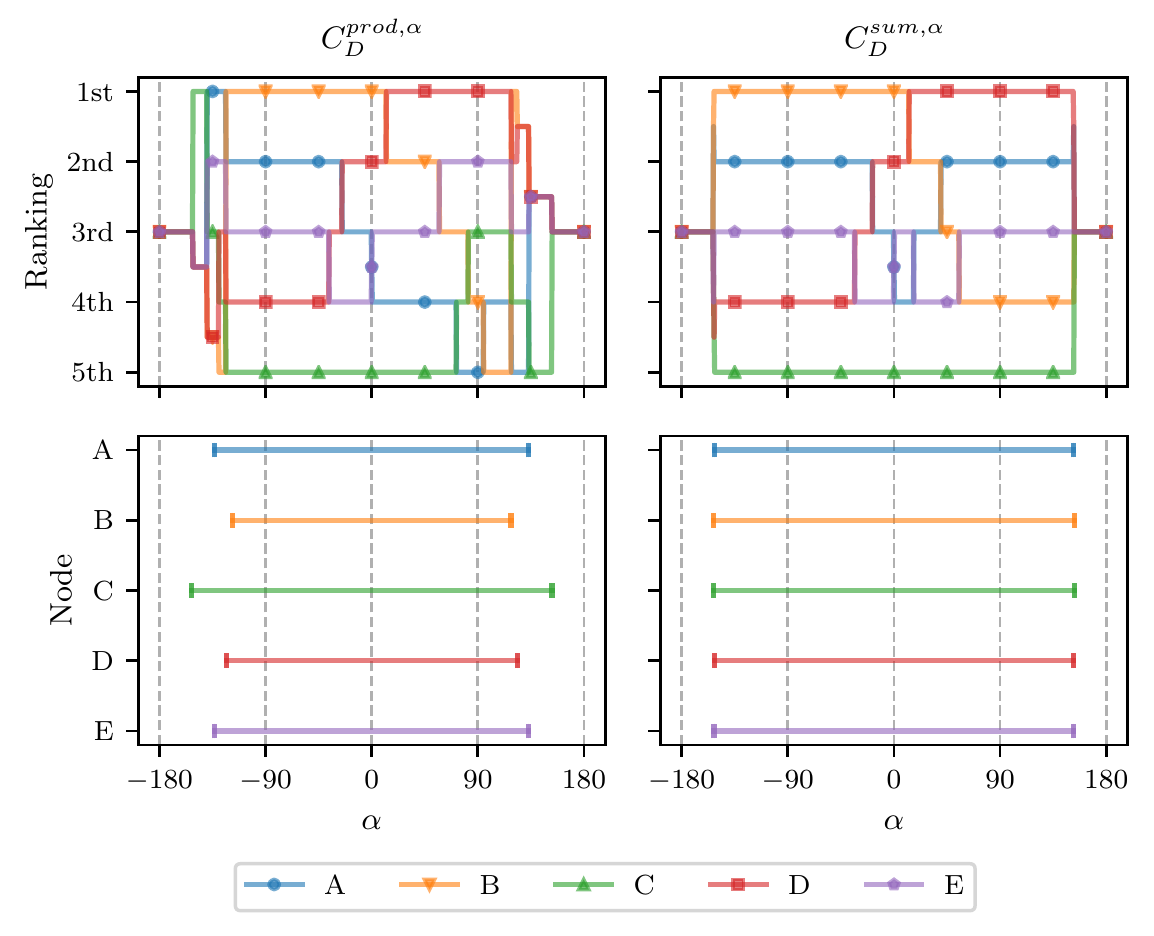}
         \caption{
            The first line displays the variation of the node ranks according to $\alpha$ for \emph{prod} and \emph{sum} alternatives of adjustable degree centrality.
            At the bottom, it is shown the range of $\alpha$ in which the measures are computable for each vertex.  %, considering the IEEE 754 double precision format.
            Beyond these limits, the measures, and consequently the rankings, are miscalculated.
            At the extrema, all values under- or overflowed.
            The safe intervals for each measure are: $S_D^{prod} = [-118.03, 118.03],\ S_D^{sum} = [-152.2, 152.2]$.
            \label{fig:exp_measures}
         }
      \end{figure}
         
      \begin{proposition} \label{prop:eq_prod_sum}
         Given a weighted graph $G = (V, E, w)$ and a vertex $u \in V$, if $\forall \{u, v\} \in E,\ w(\{u, v\}) = t$, then $\forall \alpha \in \mathbb R,\ C_D^{sum,\alpha}(u) = C_D^{prod, \alpha}(u) = k_u t^\alpha$.  
      \end{proposition}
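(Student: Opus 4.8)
The plan is to substitute the hypothesis straight into the defining expressions and simplify with elementary exponent arithmetic; no heavy machinery is needed. First I would note that, under the assumption that every edge incident to $u$ carries the same weight $t$, the strength defined in \eqref{WDC} becomes a plain product: the sum $s_u = \sum_{\{u,v\}\in E} w(\{u,v\})$ has exactly $k_u$ terms by the definition of degree in \eqref{DC}, and each of them equals $t$, so $s_u = k_u\,t$.

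Next I would treat the \emph{sum} side. By \eqref{ExpsumWDC}, $C_D^{sum,\alpha}(u) = \sum_{\{u,v\}\in E} [w(\{u,v\})]^{\alpha}$; once again each of the $k_u$ summands equals $t^{\alpha}$, whence $C_D^{sum,\alpha}(u) = k_u\,t^{\alpha}$ for every $\alpha \in \mathbb{R}$ (the sum is finite and $t > 0$ by the standing positivity assumption on weights, so the exponentiation is unambiguous and no convergence concern arises).

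Then I would treat the \emph{prod} side. Plugging $s_u = k_u\,t$ into \eqref{GeomeanWDC} yields $C_D^{prod,\alpha}(u) = {k_u}^{1-\alpha}(k_u\,t)^{\alpha} = {k_u}^{1-\alpha}\,{k_u}^{\alpha}\,t^{\alpha}$, and the identity ${k_u}^{1-\alpha}\,{k_u}^{\alpha} = {k_u}^{(1-\alpha)+\alpha} = k_u$ closes the chain $C_D^{sum,\alpha}(u) = C_D^{prod,\alpha}(u) = k_u\,t^{\alpha}$, as claimed.

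Since the argument is a one-line computation, there is no genuine obstacle; the only point deserving a word of care is the degenerate situation $k_u = 0$, where the hypothesis holds vacuously but $t$ is not defined and ${k_u}^{1-\alpha}{k_u}^{\alpha}$ is ill-behaved for $\alpha \notin [0,1]$. I would simply observe that the statement is understood for a vertex with at least one incident edge, so that $t$ is well defined and $k_u \ge 1$, which makes every exponentiation above legitimate.
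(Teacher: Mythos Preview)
Your proof is correct and follows essentially the same route as the paper: both compute $s_u = k_u t$ from the hypothesis and then reduce each of \eqref{ExpsumWDC} and \eqref{GeomeanWDC} to $k_u t^\alpha$ by elementary exponent arithmetic. The only cosmetic difference is that the paper strings the equalities into a single chain $C_D^{sum,\alpha}(u) = k_u t^\alpha = k_u (s_u/k_u)^\alpha = k_u^{1-\alpha} s_u^\alpha = C_D^{prod,\alpha}(u)$ and handles the $k_u \ge 1$ issue by invoking connectedness rather than by your explicit remark on the degenerate case.
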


      \begin{proof}
         Since the weight of every link of $u$ is $t$, $s_u = k_u t$.
         And as the graph is assumed to be connected, $t = s_u/k_u$.
         Then, for any $\alpha \in \mathbb R$, 
         \begin{align*}
            C_D^{sum,\alpha}(u) & = 
            \sum_{\{u, v\} \in E} [w(\{u, v\})]^\alpha = 
            \sum_{\{u, v\} \in E} t^\alpha =
            k_u t^\alpha \\
            & = k_u \left (\frac{s_u}{k_u} \right) ^ \alpha =
            k_u^{(1-\alpha)} s_u^\alpha =
            C_D^{prod, \alpha}(u)\ .
         \end{align*}
      \end{proof}

   \section{Improvements on Adjustable Node Centrality \label{sec:improv}}
      Opsahl's measures have different methods of calculation from which it was presented a twofold description of adjustable degree and closeness centralities, which were subsequently compared.
      There is, however, a common thread in the four expressions:
      all of them are adjusted using exponentiation.
      For this reason they are notably susceptible to floating-point errors~\cite{Overton01,Goldberg91} such as overflow or underflow, possibly combined with loss of significance, thereby limiting the range of the $\alpha$ parameter in order to enforce numerical computability.
      To quantify that, consider the widely used IEEE 754~\cite{IEEE_standard85} double precision format, in which 11 of 64 bits are reserved for an exponent field to hold a value whose magnitude is not greater than 1024.
      As a consequence, real numbers whose absolute value is outside the interval $[2^{-1024}, 2^{1024}]$ cannot be properly represented in this format.

      Hence, \emph{safe intervals} of $\alpha$ can be established for the measures:
      let $G = (V, E, w)$ be an hypothetical graph;
      let $B$ be the set of values which could be bases for exponentiation by $\alpha$ during the computation of one of the centrality indexes for vertices of G;
      let $b = \max \{(\min B)^{\ -1},\ \max B\}$;
      then such a interval is defined as $S = \left[\log_b 2^{-1024},\ \log_b 2^{1024} \right]$.
      The definition of $B$ goes as follows:
      for $C_D^{prod,\alpha}$, $B = \{k_u : u \in V\} \cup \{s_u : u \in V\}$;
      for $C_D^{sum,\alpha}$ and $C_C^{sum,\alpha}$, $B = \{w(e) : e \in E\}$;
      and for $C_C^{prod,\alpha}$, $B = \{C_C(u) : u \in V\} \cup \{C_C^w(u) : u \in V\}$.
      At last, as a slight abuse of notation, hereinafter the following expression is used to denote the length of a real interval, like $S$:
      $|S| = \max S - \min S$.

      The method presented depends on the floating-point model employed, but its utility remains as long as arithmetic precision is finite.
      Moreover, it is important to notice that setting $\alpha$ respecting its safe interval can be seen as a necessary but not sufficient condition for numerical computability:
      after all, the values exponentiated by $\alpha$ are then used in other arithmetic operations.

      Although it is possible to determine safety parameters, such a range may be quite narrow, up to the point that its consideration disables $\alpha$ flexibility and, consequently, usefulness.
      A better approach in this sense can be obtained by taking the logarithm of the indices adjustable via product.
      This leads to a third pair of measures, defined as follows:
      \begin{align}
         C_D^{log,\alpha}(u)  & = \log{\left( \frac{s_u}{k_u} \right)} \cdot \alpha + \log{k_u}\ ; \label{LogmeanWDC} \\[4pt]
         C_C^{log,\alpha}(u)  & = \log{\left( \frac{C_C^{w}(u)}{C_C(u)} \right)} \cdot \alpha + \log{C_C(u)}\ .  \label{LogmeanWCC} 
      \end{align}

      Such an approach preserves the benchmarks for $\alpha$, although the classical values of degree and closeness come to be in logarithmic fashion instead.
      Furthermore, comparability of the centralities remains unchanged, as the logarithm is a monotonic function. %in the sense that if $x$ and $y$ are two values obtained by the \emph{prod} summarization and $x>y$, then $\log{x} > \log{y}$.
      So the standings in~\cref{fig:exp_measures} for the \emph{prod} and \emph{log} methods are the same.
      This also reverberates on~\cref{fig:log_example}:
      due to the linear variation, the centrality is equalized across the parameter interval, making the entire ranking and proximity of quantities easily discernible.
      Another relevant feature of the \emph{log}-based measures is the enhanced flexibility for $\alpha$, since safe intervals become relatively unrestricted compared to those of exponential indices.
      \begin{figure}[ht]
         \centering
         \includegraphics{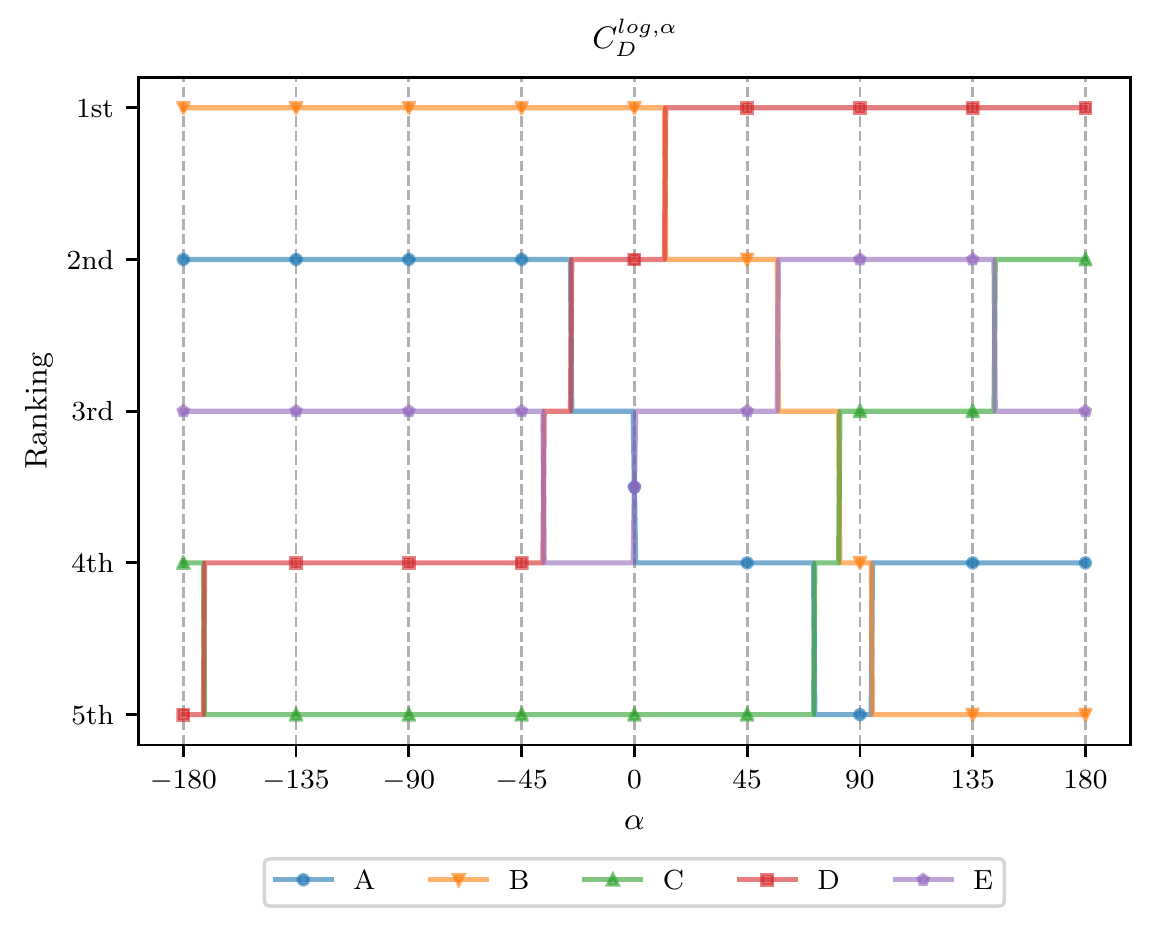}
         \caption{
            Variation of the node ranks according to $\alpha$ for \emph{log} alternative of adjustable degree centrality.
            The exchanges which happen when $\alpha \not \in [-135, 135]$ can now be correctly perceived, what was impossible in \cref{fig:exp_measures}.
            \label{fig:log_example}
         }
      \end{figure}

   \section{Assessment of Edge Weights Significance \label{sec:assess}}
      \Cref{sec:improv} introduced the idea of safe intervals for parameter $\alpha$, aiming at assuring numerical computability of the centrality indexes in question, and showed how the proposed measures $C_D^{log,\alpha}$ and $C_C^{log,\alpha}$ are superior to its counterparts in this regard.
      However in practice the entire safe intervals may not be functional:
      for example, there can be upper bound on $\alpha$ after which the node centrality ranking defined according to $C_D^{log, \alpha}$ remains unaltered;
      if the smallest value for such a bound is known, setting $\alpha$ beyond it can be considered senseless.
      This reasoning enables the definition of a \emph{useful interval}, denoted as $U$, according to the first and last change points of the ranking.
      This section details how these intervals can be interpreted, and related to some well-known network global measures.

      How these useful intervals can be understood is the first topic to be addressed because it also provides some motivation for their establishment.
      Varying parameter $\alpha$ allows to observe the same graph from different points of view, which can be materialized in the form of node centrality rankings, as already illustrated in \cref{fig:exp_measures}.
      Therefore the most straightforward application of useful intervals is to avoid looking for such alternative perspectives where there is none.
      But a more interesting possibility comes from evaluating how narrow the interval is, as an estimate how condensed are these perspectives, and how smooth is the transition between them.
      Since these transitions are solely due structural distinctions between the nodes which are reflected by benchmark centralities (i.e., when $\alpha \in \{0, 1\}$), the interval length can be seen as an indicator of node uniqueness in this sense.

      For a better comprehension of the information provided by interval length, consider the following analyses of two degenerate cases:
      \begin{itemize}
         \item
            Let $G = (V, E, w)$ be an hypothetical graph, such that $\forall e \in E,\ w(e) = 1$ and $\exists \{u, v\} \subset V,\ k_u \not = k_v$.
            In this scenario, $\forall u \in V,\ C_D(u) = C_D^w(u) = k_u$ and, therefore, $\forall u \in V,\ \forall \alpha \in \mathbb R,\ C_D^{log, \alpha}(u) = \log k_u$.
            In other words, the node ranking with respect to degree centrality would be the same regardless of $\alpha$. 
            Consequently $U_D = (-\infty, +\infty) = \mathbb R$, as there is no upper or lower bound on the value of $\alpha$ for delimiting changes in such a ranking.
            This reflects how similar the nodes are considering the influence of edge weights on their centrality statuses.

         \item
            Now let $G = (V, E, w)$ be another hypothetical graph, such that $\forall \{u, v\} \subset V,\ k_u = k_v$ and $\exists \{u, v\} \subset V,\ s_u \not = s_v$.
            In this case, the only change point in node ranking with respect to degree centrality is when $\alpha = 0$, what leads to determining $U_D = [0, 0]$.
            Since all nodes have the same degree but not the same strength, the range of $\alpha$ which allows multiple perspectives of the graph is minimum, as the importance of edge weights on switching between these perspectives is maximum.

      \end{itemize}

      How wide the useful intervals are depends on network global structure, and provides a feedback which is global as well.
      The relevance of edge weights to such a structure was previously ruled as of lesser importance relatively, considering a comparison between nodes degrees and strengths~\cite{Mastrandrea_2014}.
      Despite the soundness of such a statement, having the means to concretely evaluate this is quite advantageous, since there is also the indication that such a difference in informativeness varies according to the network in question. 
      As an example, Small-world-ness~\cite{journal.pone.0002051}, another graph global measure, was developed with the same target of quantifying an until then imprecise concept.
      However, Small-world-ness as well as the majority of the most popular graph global measures~\cite{Watts1998Collective, ZhouM04, PhysRevE.67.026126} are unrelated to edge weights, leaving a blank this work targets to fill to some extent.

   \section{Extrema Intersection of Unbounded Lines \label{sec:inter}}
      As previously stated, a useful interval of $\alpha$ for a certain centrality index is determined by the lowest and highest values of $\alpha$ which lead to changes in the node ranking defined according to that index.
      Such changes can be easily identified in a diagram like \cref{fig:log_example} since they happen whenever there is an intersection between any pair of non-coincident lines.
      Therefore determining useful intervals for the log-based measures comes down to determining the ``leftmost'' and ``rightmost'' single-point intersections of a collection of unbounded lines.

      The most naive approach for such a goal is to exhaustively verify if the intersection of each pair of lines is one of the extrema, which has a time complexity of $O(n^2)$.
      Fortunately improving this cost is possible avoiding to determine all intersections but directly finding those of interest.
      The relatable problem of deciding if there is at least one intersection between any pair of line segments given a set of these was previously addressed this way, so that an $O(n \log n)$ solution was provided~\cite{ShamosH76}.
      Although this method cannot be directly adapted to the problem at hand, a solution with similar cost, principles, and applicability, which is not limited to the present context, could be found.
      It is based on the following statements:

      \begin{remark}
         There is a 1-to-1 correspondence between a line described as $y = a x + b$ and an ordered pair $(a, b)$, so that an indexed collection of lines can be denoted by $L = \{(a_i, b_i)\}$, with $l_i = (a_i, b_i)$ representing the $i$-th line.
      \end{remark}
      
      \begin{remark} \label{rmk:hasint}
         Given the lines $L = \{(a_i, b_i)\}$, if there is no intersection between the $i$-th line and any other line (i.e., every line has the same slope of $l_i$, being parallel to it), there is no intersection between any pair of lines, since all of them are pairwise parallel.
      \end{remark}

      \begin{remark} \label{rmk:toleft}
         Given the lines $L = \{(a_i, b_i)\}$, if two of them intersect when $x = r$, for some $r \in \mathbb R$, the leftmost intersection between elements of L happens when $x = s < r+1$.
      \end{remark}
      
      \begin{proposition} \label{prop:toleft}
         Let $(a_i, b_i)$ and $(a_j, b_j)$ be lines such that $a_i r + b_i > a_j r + b_j$, for some $r \in \mathbb R$.
         If their intersection happens when $x = s < r$, then $a_i > a_j$.
      \end{proposition}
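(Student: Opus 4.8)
The plan is to reduce everything to the sign of a single affine function, namely the vertical gap between the two lines. Define $f(x) = (a_i x + b_i) - (a_j x + b_j) = (a_i - a_j)x + (b_i - b_j)$. The two hypotheses translate directly into sign information about $f$ at two distinct abscissas: the intersection at $x = s$ gives $f(s) = 0$, while the comparison $a_i r + b_i > a_j r + b_j$ gives $f(r) > 0$, and we are told $s < r$.

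Next I would subtract these two evaluations to eliminate the intercept term. Since $f$ is affine, $f(r) - f(s) = (a_i - a_j)(r - s)$. By the hypotheses the left-hand side equals $f(r) - 0 = f(r) > 0$, so $(a_i - a_j)(r - s) > 0$. Because $r - s > 0$ by assumption, dividing through (equivalently, using that a product of two reals is positive and one factor is positive) forces $a_i - a_j > 0$, i.e.\ $a_i > a_j$, which is exactly the claim. This also silently confirms that the two lines are non-parallel, as any genuine single-point intersection requires, so there is no degenerate subcase to dispatch separately.

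I do not anticipate a real obstacle here: the statement is essentially the observation that if line $i$ sits strictly above line $j$ at some point to the right of where they cross, then line $i$ must be climbing faster, and the affine-difference trick makes this rigorous in one line. The only thing to be careful about is keeping the inequalities strict throughout — in particular that $x = s < r$ is a \emph{strict} inequality so that the factor $r - s$ is strictly positive, and that the comparison at $r$ is strict so that $f(r) > 0$ rather than $\ge 0$; both are given in the statement, so the conclusion $a_i > a_j$ is strict as asserted. A symmetric remark (which I would mention for completeness, as it is the companion needed for locating the \emph{rightmost} intersection in the sense of \cref{rmk:toleft}) is that if instead $s > r$ the same computation yields $a_i < a_j$.
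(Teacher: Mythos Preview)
Your proof is correct and follows essentially the same approach as the paper: both subtract the intersection equality at $s$ from the strict inequality at $r$ and then divide by the positive quantity $r-s$ (the paper writes $r=s+m$ with $m>0$, which is just a relabeling of your $r-s$). Your packaging via the affine difference $f(x)=(a_i-a_j)x+(b_i-b_j)$ is a clean way to present the same computation.
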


      \begin{proof}
         Suppose that the antecedent of the desired conclusion is true.
         Since $s < r$, then $r = s + m$, with $m \in \mathbb R_{> 0}$.
         Therefore, $a_i r + b_i > a_j r + b_j \implies a_i (s + m) + b_i > a_j (s + m) + b_j$.
         From the supposition, $a_i s + b_i = a_j s + b_j$, what allows to derive from the last inequality that $ a_im > a_jm \implies a_i > a_j$.
      \end{proof}

      \begin{proposition} \label{prop:triplet}
         Let $l_i = (a_i, b_i), l_j = (a_j, b_j)$ and $l_k = (a_k, b_k)$ be lines such that $a_i > a_j > a_k$.
         Let $y_i = a_i o + b_i, y_j = a_j o + b_j$, and $y_k = a_k o + b_k$, for some $o \in \mathbb R$, with $y_i \ge y_j$ and $y_i \ge y_k$.
         Consider that the intersections between the lines $l_i$ and $l_j$, $l_i$ and $l_k$, and $l_j$ and $l_k$ happen respectively when $x = r$, $x = s$, and $x = t$.
         If $s \le r$, then $t \le s$.
      \end{proposition}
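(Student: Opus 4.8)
The plan is to sidestep any geometric case analysis and rely on a single algebraic identity. For two lines $l_p = (a_p, b_p)$ and $l_q = (a_q, b_q)$ with $a_p \neq a_q$, the difference $(a_p x + b_p) - (a_q x + b_q)$ is an affine function of $x$ with slope $a_p - a_q$ that vanishes exactly at the abscissa of their intersection; hence, if they meet at $x = x_0$, then $(a_p x + b_p) - (a_q x + b_q) = (a_p - a_q)(x - x_0)$ for every $x$. Writing $y_m(x) = a_m x + b_m$, the whole statement then reduces to applying this identity to the three relevant pairs and chaining a few sign comparisons.

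First I would apply the identity to $l_i$ and $l_j$, which intersect at $x = r$, evaluated at $x = s$: this gives $y_i(s) - y_j(s) = (a_i - a_j)(s - r)$. Since $a_i > a_j$ and, by hypothesis, $s \le r$, the right-hand side is $\le 0$, so $y_i(s) \le y_j(s)$. Next, because $l_i$ and $l_k$ intersect precisely at $x = s$, we have $y_i(s) = y_k(s)$; combining this with the previous inequality yields $y_j(s) \ge y_k(s)$.

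Finally I would apply the identity to $l_j$ and $l_k$, which intersect at $x = t$, again evaluated at $x = s$: this gives $y_j(s) - y_k(s) = (a_j - a_k)(s - t)$. The left-hand side is nonnegative by the previous step and $a_j - a_k > 0$ by hypothesis, so $s - t \ge 0$, i.e. $t \le s$, as desired. The boundary case $s = r$ is harmless: it forces $y_i(s) = y_j(s) = y_k(s)$, so all three lines share a common point and $t = s = r$.

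I do not expect a genuine obstacle: once the difference identity is set up, the argument is three one-line sign checks. The only point worth flagging is that the hypothesis positing a point $o$ with $y_i(o) \ge y_j(o)$ and $y_i(o) \ge y_k(o)$ is not actually used by this implication — such an $o$ always exists because $l_i$ has the largest slope — so it is retained only to describe the setting in which the proposition is invoked when locating the leftmost intersection. I would keep the proof self-contained rather than routing it through \cref{prop:toleft}, whose implication runs in the opposite direction (from a value ordering to a slope ordering) and would only add indirection here.
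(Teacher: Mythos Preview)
Your argument is correct and takes a genuinely different route from the paper. The paper proceeds by rewriting $r-o$, $s-o$, and $t-o$ as quotients of $y$-value differences over slope gaps, introduces auxiliary positive parameters $m,n$ for the slope gaps and (after first deriving $y_k \le y_j$) nonnegative parameters $p,q$ for the $y$-gaps, and then chains algebraic manipulations on these four parameters until the inequality $-p/m \le -(p+q)/(m+n)$ drops out, which is $t-o \le s-o$. Crucially, its intermediate step that establishes $y_k \le y_j$ uses the hypothesis $y_i \ge y_k$ at the reference point $o$ to control a sign. Your approach replaces all of this with the single affine-difference identity $y_p(x) - y_q(x) = (a_p - a_q)(x - x_0)$ evaluated at the strategic abscissa $x = s$, which collapses the whole argument to three sign checks and, as you correctly note, makes the hypothesis about $o$ unnecessary (such an $o$ always exists since $l_i$ has the largest slope). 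The paper's version stays closer to the explicit coordinate expressions it has already set up, while yours is shorter, avoids auxiliary variables, and mildly strengthens the proposition by dispensing with one assumption.
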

      
      \begin{proof}
         From the premisses, $$r = \frac{b_j - b_i}{a_i - a_j},\qquad s = \frac{b_k - b_i}{a_i - a_k},\qquad t = \frac{b_k - b_j}{a_j - a_k}\ .$$
         Since $a_i > a_j > a_k$, then $a_j = a_k + m$ and $a_i = a_k + m + n$, with $\{m, n\} \subset \mathbb R_{> 0}$.
         Now suppose that, $s \le r$.
         As $s - o \le r - o$, it can be stated that $$\frac{y_k - y_i}{m + n} \le \frac{y_j - y_i} n\ .$$
         From the fact that $$\frac{y_k - y_i}{m + n} \ge \frac{y_k - y_i} n\ ,$$ it follows that $$\frac{y_k - y_i} n \le \frac{y_j - y_i} n \implies y_k - y_i \le y_j - y_i \implies y_k \le y_j\ .$$
         This enables to state that $y_i \ge y_j \ge y_k$ and, consequently, that $y_j = y_k + p$ and $y_i = y_k + p + q$, with $\{p, q\} \subset \mathbb R_{\ge 0}$.
         Therefore the supposed inequality can be once again rewritten, this time as $$\frac{-(p+q)}{m+n} \le \frac{-q} n\ .$$
         Consequently, $$-pn - qn \le -qm - qn \implies -pm - pn \le -pm - qm \implies \frac{-p} m \le \frac{-(p + q)}{m+n}\ .$$
         Since the left- and right-hand sides of this last inequality respectively equal to $t$ and $s$, the proof is concluded.
      \end{proof}

      As a preamble to the core procedure, for the input lines $L = \{(a_i, b_i)\}$, consider that an intersection between any pair of lines is determined on $x = r$:
      as \cref{rmk:hasint} indicates, if this is not possible there is no intersection at all, and this can be decided in linear time.
      Next the lines are sorted in ascending order of $a_i (r + 1) + b_i$, so that the leftmost intersection happens on some $s < r+1$, as affirmed in \cref{rmk:toleft}.
      These two steps have a time complexity of $O(n \log n)$.

      At last, the algorithm itself has an inductive design~\cite{Manber88}, linearly incrementing on the lines.
      From now on the value of $x$ of the leftmost intersection between two of the first $n$ lines is represented by $x_n$.
      In the base case, only the first line is taken into account:
      in this trivial scenario, there are no intersections, what is denoted by $x_1 = -\infty$.
      This part has constant computational cost.

      Now, for the inductive step, first assume that $x_{n-1}$ is known.
      This can also be the solution for the first $n$ lines (i.e., $x_n = x_{n-1}$), but maybe an intersection between $l_n$ and any of the previous lines leads to a better solution.
      According to \cref{prop:toleft}, only lines whose slope is lower than that of $l_n$ could allow this improvement.
      And based on \cref{prop:triplet}, if there is more than one line matching this criterion, the one with the greatest slope, represented as $l_{n'}$, should be responsible for the leftmost intersection with $l_n$, or no intersection with this line would be the leftmost of all.
      Thus, considering that an intersection between $l_n$ and a hypothetical $l_{n'}$ happens when $x = t$, it follows that
      $$
         x_n = 
         \begin{cases}
            x_{n-1} & \text{if the slope of $l_n$ is the smallest of the first $n$ lines} \\
            t & \text{if } x_{n-1} = -\infty, \\
            \min(x_{n-1}, t) & \text{otherwise}.
         \end{cases}
      $$

      With respect to the algorithmic complexity, each iteration of the inductive step is dominated by the identification of $l_{n'}$.
      Any self-balancing binary search tree~\cite{Knuth98a} could be used in this regard, whose queries and updates require $O(\log n)$ elementary operations.
      Consequently, the entire procedure has a time complexity of $O(n \log n)$ and a space complexity of $O(n)$.

      As a closing remark, it is noticeable that the explanation just provided focused on finding the leftmost intersection of a pair of lines of $L$.
      However, since the goal is to determine the length of the intervel in which all intersections happen, only this is now enough.
      Fortunately the rightmost intersection can be determined by the same means, using the symmetry argument demonstrated next in \cref{prop:reciprocal}.
      \begin{proposition} \label{prop:reciprocal}
         The rightmost intersection of lines $L = \{(a_i, b_i)\}$ is the reciprocal leftmost intersection of lines $L' = \{(-a_i, b_i)\}$.
      \end{proposition}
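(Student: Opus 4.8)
The plan is to exhibit an explicit bijection between the single-point intersections of $L$ and those of $L'$ that negates the abscissa while fixing the ordinate, and then to read off the proposition from the fact that negation reverses the order on $\mathbb R$. (Here the ``reciprocal'' in the statement is understood concretely as the additive inverse: if the leftmost intersection of $L'$ occurs at $x = s$, then the rightmost intersection of $L$ occurs at $x = -s$.)

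First I would fix the index-preserving correspondence $l_i = (a_i, b_i) \leftrightarrow l'_i = (-a_i, b_i)$ and observe that $l_i$ and $l_j$ meet in exactly one point iff $a_i \neq a_j$, which holds iff $-a_i \neq -a_j$. Hence the pair $(i,j)$ contributes a single-point intersection to $L$ iff it contributes one to $L'$; in particular, $L$ has no intersection at all iff $L'$ has none, so the degenerate case flagged by \cref{rmk:hasint} transfers across the correspondence and in that case both the ``leftmost'' and ``rightmost'' intersections are simultaneously undefined, making the identity vacuously true.

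Next, for a pair with $a_i \neq a_j$, I would compute the abscissa of the intersection of $l_i$ and $l_j$ as $r = \frac{b_j - b_i}{a_i - a_j}$ and the abscissa of the intersection of $l'_i$ and $l'_j$ as $\frac{b_j - b_i}{(-a_i) - (-a_j)} = -\frac{b_j - b_i}{a_i - a_j} = -r$; substituting $-r$ into $y = -a_i x + b_i$ returns $a_i r + b_i$, the common ordinate of the original crossing, confirming that the intersection point is merely reflected through the $y$-axis. Consequently the set of intersection abscissas of $L'$ is exactly $\{-r : r \text{ is an intersection abscissa of } L\}$.

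Finally, taking the infimum of this set gives $\inf\{-r\} = -\sup\{r\}$, so the leftmost intersection of $L'$ lies at the negation of the rightmost intersection of $L$; rearranging yields the claimed relation. The argument is essentially bookkeeping, so I do not expect a genuine obstacle; the only point needing care is precisely the degenerate case above — checking that parallelism and ``no intersection whatsoever'' are preserved by the slope negation — which is why I would dispose of it before handling the generic pairs.
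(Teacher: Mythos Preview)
Your argument is correct and follows essentially the same route as the paper: both exploit the reflection $x \mapsto -x$, which sends each line $(a_i,b_i)$ to $(-a_i,b_i)$ and carries an intersection at $x=r$ to one at $x=-r$, so that rightmost in $L$ becomes leftmost in $L'$. The only cosmetic difference is that the paper phrases the extremality step as a short contradiction (``if $-r$ were not leftmost in $L'$, then $r$ would not be rightmost in $L$''), whereas you package the same reversal as $\inf\{-r\} = -\sup\{r\}$ and additionally spell out the degenerate no-intersection case.
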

      
      \begin{proof}
         Suppose that $i$-th and $j$-th lines of $L$ are those whose intersection is the rightmost one, which happens on $x = r$.
         Then, $a_i r + b_i = a_j r + b_j \implies (-a_i) (-r) + b_i = (-a_j) (-r) + b_j$.
         Therefore there would also be an intersection between lines $(-a_i, b_i)$ and $(-a_j, b_j)$ of $L'$ on $x = -r$.
         Now suppose, for the sake of contradiction, that this is not the leftmost intersection of $L'$, but that of lines $(-a_k, b_k)$ and $(-a_l, b_l)$ on $x = -s < -r$.
         Then there is also an intersection between $(a_k, b_k)$ and $(a_l, b_l)$ of $L$ on $x = s > r$, what opposes the initial assumption.
         This concludes the proof.
      \end{proof}

   \section{Experimental Evaluation \label{sec:eval}}
      This section is focused on depicting meaningful examples of the application of the measures proposed in~\cref{sec:assess,sec:improv}.
      For this purpose the null models described in~\cref{sec:basic} were used, considering the variation of parameters such as the number of nodes and edges, and the weights distribution.
      The same evaluation was also performed for real-world networks and respective randomly generated surrogates which preserve some attributes of the original artifacts.
      The multiplicative inverse of the edge values was used to calculate the weighted distances in all cases.

      It is worth pointing out that the experiments and respective analyses are far from exhausting the proposed methodology.
      Here it is shown how the measures behave in various scenarios, aiming at providing insights into how some graph attributes can influence them.
      Further studies may consider other characteristics, in addition to those assessed here, for the purpose of elucidating more properties and accomplishing other predictions.

      \subsection{Synthetic Networks}
         Regarding the random graph models, it was implemented the ER, $G(n,p)$, but with weights assigned from a normal distribution:
         henceforth this is called ER+normal model.
         This derivative was adopted because it is very convenient to verify the effects of the respective mean and variance on our measures.
         The WRG model~\cite{Garlaschelli_2009}, on the other hand, was previously established in the literature in a clear fashion, producing graphs whose edge weights follow a geometric distribution.
         Such synthetic data favor results interpretability since factors which could affect them are known a priori.

         The default values used for each parameter are the following:
         \emph{Number of vertices (n)}, 200;
         \emph{Probability of connection (p)}, 0.2;
         \emph{Mean weight ($\mu$)}, 10;
         \emph{Standard deviation of weights ($\sigma$)}, 1.
         For each parameter setting, 1000 random graphs were generated for statistics computation:
         for each graph, averages of measures of interest were calculated and then summarized to produce the figures shown next.
         Since the length of a useful interval can be infinite, the median and interquartile range (IQR) were used to summarize some results since they are superior to the counterparts mean and standard deviation with respect to handling such a special value.
        
         In the first experiment to be reported, the effects resulting from varying the parameters of the normal distribution for ER+normal graphs were analyzed.
         Since topology settings remained unaltered, only measures based on the weights would reflect such changes.
         Hence, as shown in the first and third columns of~\cref{fig:normalpar_chg}, the raise of the mean edge value was naturally reproduced by the node strengths.
         Likewise, the average weighted distance decreased once the weights were larger -- indicating closer relationships.
         And as confirmed in the second column of the figure, with the increase of $\sigma$ it was expected that, on average, strengths would not vary while distances would decrease slightly.
         That is because, with a greater variety of weights, it is more likely that the minimum weight of all paths between any two vertices becomes smaller.
         \begin{figure}[ht]
            \centering
            \includegraphics{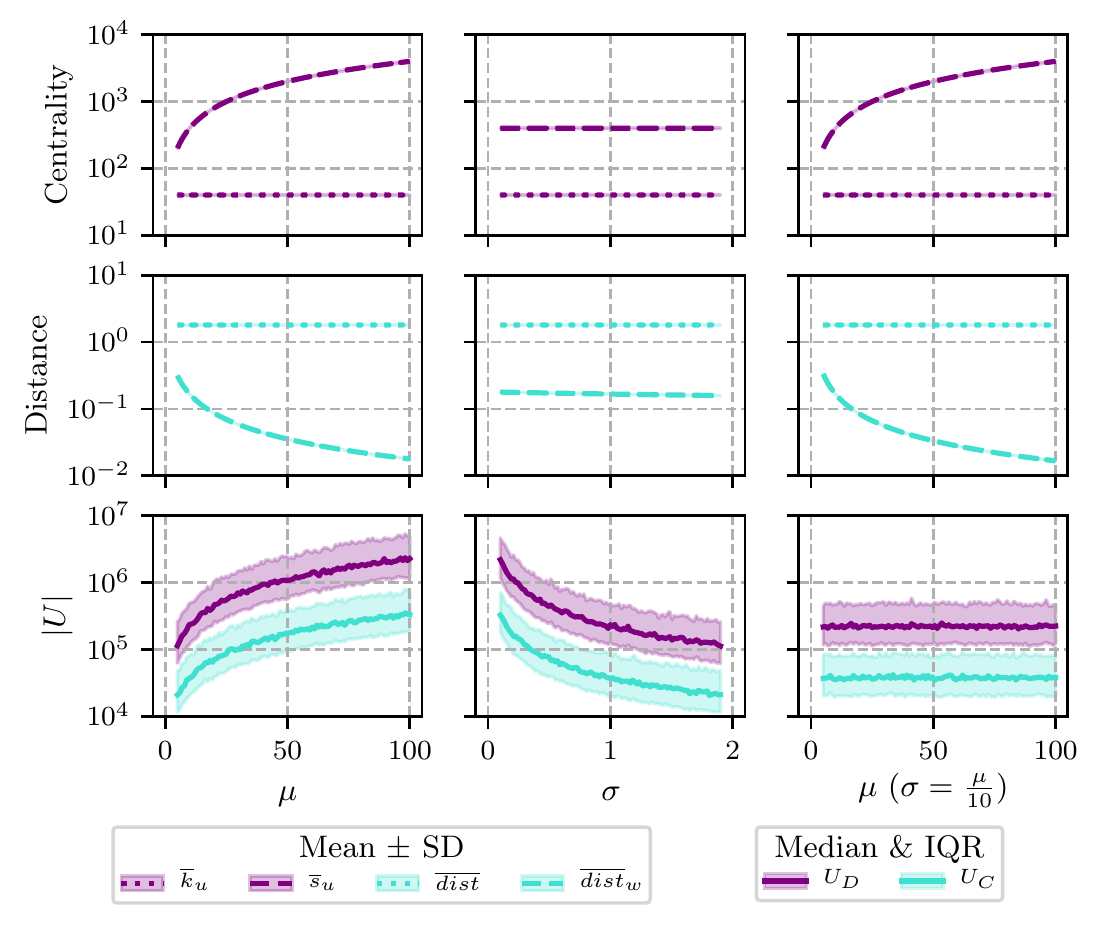}
            \caption{
               The effects that the variation of the normal distribution parameters cause on edge weights significance.
               \label{fig:normalpar_chg}
            }
         \end{figure}

         Regarding the lengths of the useful intervals, $|U_D|$ and $|U_C|$, it is remarkable that their variation always followed the inverse trend of the weights coefficient of variation (CV), $\sigma / \mu$.
         When only the mean increased, the CV decreased while both $|U_D|$ and $|U_C|$ also increased, suggesting a smaller relevance of the weights for both centrality perspectives.
         On the other hand, when the standard deviation of the weights increased while the mean remained constant, the opposite happened for both CV as well as the useful intervals.
         At last, when both mean and standard deviation varied together, it can be noticed the stability of the importance of the weights, as shown in third column of \cref{fig:normalpar_chg}.

        %  \mapa{
        %     Apontar e comentar a 2a figura.
        %     \douglas{
        %       Premissa: aumento de n.
        %       Consequentemente, redução de $p = (\ln n) / \sqrt n$.
        %       Também aumento do número de vizinhos esperado $(n-1)p \simeq (\ln n) \sqrt n $, o que garante o aumento de $\overline{k_u} = (n-1)p $ pra ambos modelos.
        %       No WERG o $\overline{s_u} = (n-1)p\mu$ tb aumenta dado que $\mu$ é constante.
        %       Já no WRG, cujos pesos seguem uma distribuição geométrica com parâmetro $1-p$, a média dos pesos $1/(1-p)$ é decrescente, mas sem superar o crescimento de $\overline{k_u}$, fazendo com que $\overline{s_u}$ ainda seja crescente.
        %       O aumento da distância média sem considerar os pesos é esperado \cite{PhysRevE.70.056110}.
        %       No WERG, com a média dos pesos constante, as médias das distâncias ponderadas só diferem por um fator multiplicativo.
        %       Já no WRG, os pesos decrescem em média, tendendo a 1, fazendo com que seus inversos também sigam a mesma tendência e assim as distâncias ponderadas cada vez mais se aproximam das não-ponderadas.
        %       Por fim, seria esperado para ambos os modelos que a maior quantidade de vértices tornasse mais provável a similaridade entre centralidades e consequente maior $|U|$.
        %       Já os menores valores de $|U|$ para o WRG comparado com o WERG podem ser associados ao fato de que o coeficiente de variação dos pesos do 1o, apesar de decrescente, é sempre superiore ao do último (cujo valor é constante, 0.1), oq levaria a uma maior diversidade de pesos e consequentemente de centralidades.
        %     }
        %  }
        
         As another experiment, it was also considered the consequences of changing the number of nodes, $n$, for ER+normal graphs as well as for realizations of the WRG model.
         In this scenario the probability of connection was $p = (\ln n) / \sqrt n$, which is greater than the asymptotic threshold of connectivity of $G(n,p)$, $p = (\ln n) / n$~\cite{rgd}, for $n > 4$.
         \Cref{fig:size_chg} shows that, although a larger $n$ leads to a smaller $p$, the expected number of neighbors grows with $n$:
         $\mathbb E[k_u] = (n-1)p \simeq (\ln n) \sqrt n$.
         And regarding the average unweighted distance, its growth with $n$ was anticipated~\cite{PhysRevE.70.056110}.
         \begin{figure}[ht]
            \centering
            \includegraphics{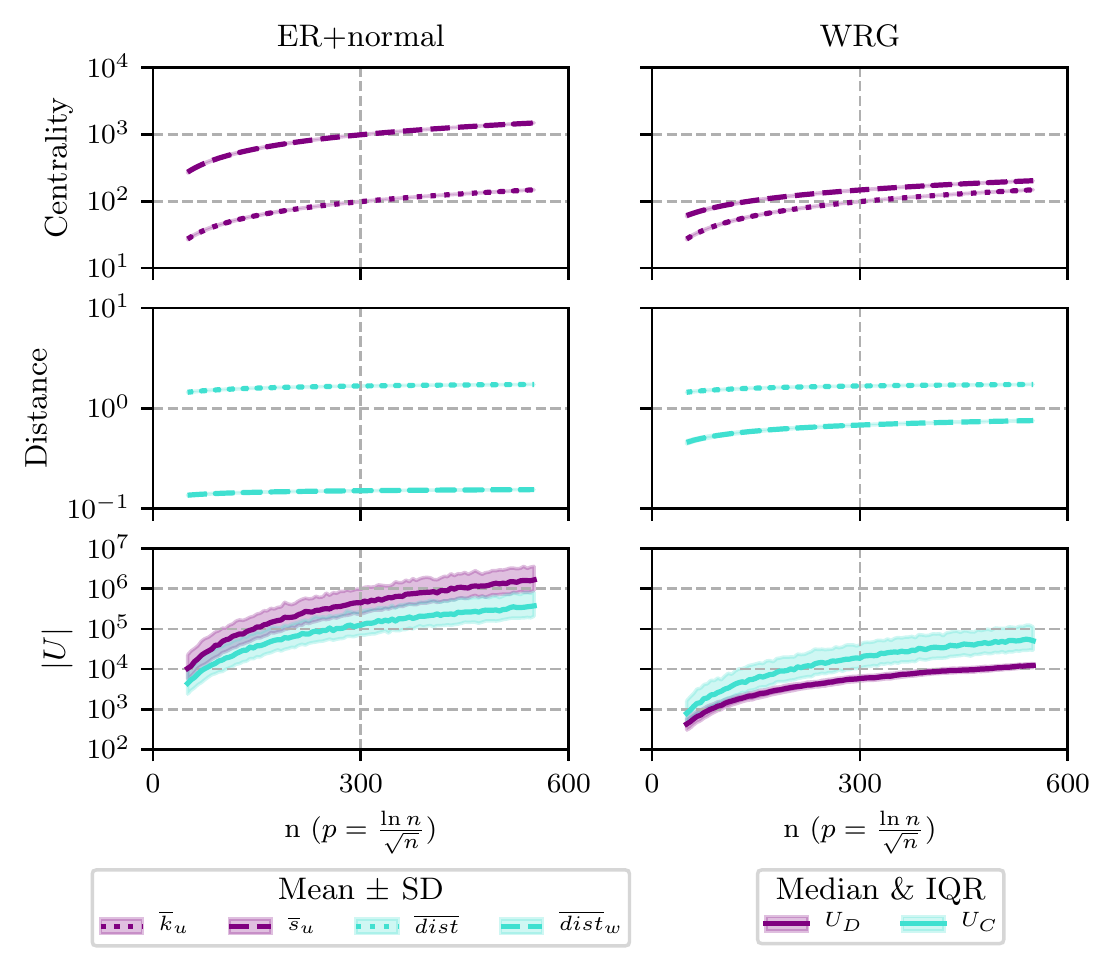}
            \caption{
               Variation of the graph size and its influence on the properties of random graphs obtained from the ER+normal and WRG models.
               \label{fig:size_chg}
            }
         \end{figure}

         The expected strength is $\mathbb E[s_u] = \mathbb E[k_u] \mathbb E[w]$, with the expected edge weight, $\mathbb E[w]$, being defined in a distinct fashion for each of the two graph models:
         for the ER+normal, $w \sim \text{Normal}(10, 1) \Rightarrow \mathbb E[w] = 10$, and since $\mathbb E[w]$ is constant, $\mathbb E[s_u]$ is directly proportional to $\mathbb E[k_u]$;
         for the WRG, $w \sim \text{Geometric}(1-p) \Rightarrow \mathbb E[w] = 1/(1-p)$, so that the mean weight decreases as $n$ increases, but without exceeding the pace of growth of $\mathbb E[k_u]$, resulting in a still increasing $\mathbb E[s_u]$.
         The average weighted distance for the ER+normal model differs from the unweighted counterpart by a constant multiplicative factor, $\mu = 10$.
         And as in the WRG $\mathbb E[w]$ tends to 1 with the growth of $n$, the weighted distances approach the unweighted ones.

         It would be reasonable to predict that the increase of the number of nodes would make the occurrence of nodes with similar centrality profiles more probable, prompting the enlargement of the useful intervals.
         Lastly, the lower values of $|U_D|$ and $|U_C|$ for the WRG, in comparison with those of the ER+normal, can be associated with the fact that the former's coefficient of variation of weights is always higher than the latter -- whose value is constant, 0.1.
         This would result in a greater diversity of edge values and, consequently, of centralities.

        %  \mapa{
        %     Apontar e comentar a 3a figura.
        %  }
        
         Regarding the probability of connection, as can be seen in~\cref{fig:prob_chg}, it was varied from a relatively small value, discarding generated graphs which were not connected, up to a setting that approximates that of a complete graph.
         As $p$ increases, the node degrees also increase, while the unweighted distances decrease due to the higher amount of edges.
         For the ER+normal once again the strengths and weighted distances averages differ from their unweighted versions by the constant $\mu = 10$.
         Whereas for the WRG, the average strength is aligned to the fact that $s_u \sim \text{NegativeBinomial}(n-1, p) \Rightarrow \mathbb E[s_u] = p(n-1)/(1-p)$, while the average weighted distance exhibit a similar pattern, but vertically reflected, with a decreasing tendency:
         this could be owed to the use of the multiplicative inverse of weights for distance calculations.
        
         Observing the length of the useful intervals, the discrepancy between $|U_D|$ for each the considered graph models is remarkable.
         When $p$ is closer to 0, in either ER+normal and WRG the mean and standard deviation of both $s_u$ and $k_u$ have their lowest values.
         Despite that, for the first, $|U_D|$ has its maximum value, while for the second, its value is minimum, so that the decreasing tendency of $|U_D|$ for the ER+normal contrasts with the opposite behavior respective to the WRG.
         This can be interpreted as an evidence of how distinct is the information provided by the useful intervals compared to that of the centrality measures, regardless of the weights.
         Meanwhile, $|U_C|$ has a relatively similar behavior for both methods.
         This occurs, however, even with the sample CV of $dist_w$ for the WRG being consistently greater that of ER+normal for most of the range of $p$.
        
         \begin{figure}[ht]
            \centering
            \includegraphics{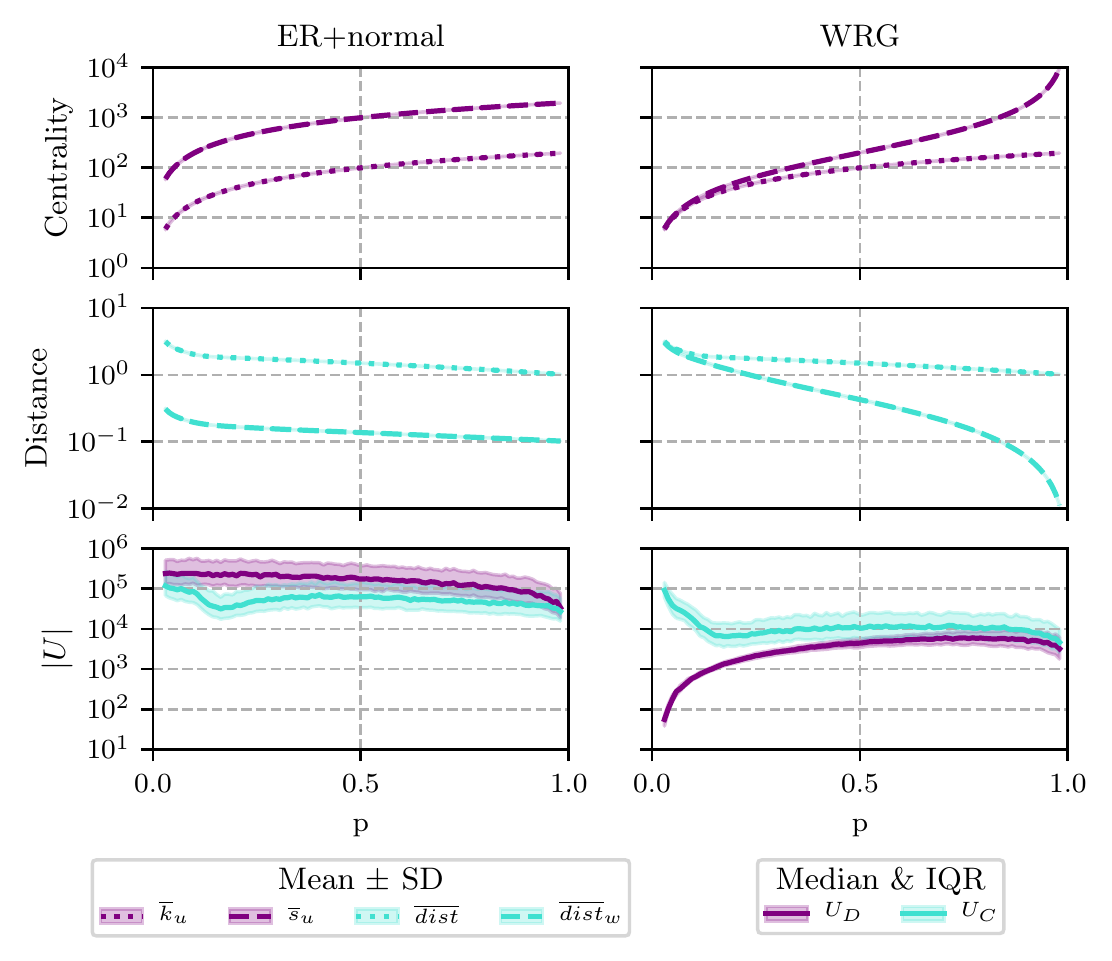}
            \caption{
               Variation of the probability of connection and its influence on the properties of random graphs obtained from the ER+normal and WRG models.
               \label{fig:prob_chg}
            }
         \end{figure}

      \subsection{Real Networks}
         The three datasets/networks briefly described below were also analyzed.
         %For the analysis of real data, we used three datasets described below.
         A summary of some relevant features of each network is presented in~\cref{tab:realdatasets}.
         \begin{itemize}
            \item \emph{Freeman's EIES Network}~\cite{freeman1979networkers}:
               established in 1978, the dataset encompasses three different networks from which we opted to use the one that represents the number of messages sent among 32 researchers using an electronic communication tool.
               This network was also used in Opsahl's original work~\cite{OpsahlAS10}.
               Originally, its links are directed.
               We chose to transform the network into an undirected one, in which each edge $e=\{u,v\}$ has a weight equal to the sum of the values its directed versions in the original graph: $w(e)=w((u,v))+w((v,u))$.
               Self-loops were also removed.
            \item \emph{US Air Transportation Network}~\cite{colizza2007reaction}:
               the nodes represent the top 500 US airports according to the amount of traffic in 2002, while the weights consist of the number of seats available on scheduled flights between airports.
            \item \emph{Brazilian Congressmen Voting Network}~\cite{cnaa006}:
               represents the similarity between voting profiles of Brazilian congressmen from 2015 to 2016.
               Each node represents one congressman and an edge will be present if two congressmen voted identically on at least one bill.
               The weights range from 0 (no commonalities) to 1 (identical) representing the proportion of agreements in total votes.
         \end{itemize}

         \begin{table}[ht]
            \small
            \setlength{\tabcolsep}{0.4\tabcolsep}
            \begin{tabular}{lrrrrrrrrr}
               \toprule
               \multirow{2}{*}{Feature} & \mc{9}{c}{Network}                                                               \\
               \cmidrule{2-10}
               & \mc{3}{c}{Messages}     & \mc{3}{c}{US Flights}     & \mc{3}{c}{Voting}         \\
               \midrule
               Nodes                    & \mc{3}{c}{32}           & \mc{3}{c}{500}             & \mc{3}{c}{634}            \\
               Edges                    & \mc{3}{c}{266}          & \mc{3}{c}{2980}            & \mc{3}{c}{176426}         \\
               $S_D^{prod}$             & [ & -82.1,  & 82.1]     & [ & -40.0,     & 40.0]     & [ & -110.1,    & 110.1]   \\
               $S_C^{prod}$             & [ & -177.1, & 177.1]    & [ & -90.5,     & 90.5]     & [ & -58.4,     & 58.4]    \\
               $S_D^{sum} = S_C^{sum}$  & [ & -103.5, & 103.5]    & [ & -48.5,     & 48.5]     & [ & -104.7,    & 104.7]   \\
               $U_D$                    & [ & -108.6, & 104.1]    & [ & -147724.5, & 14114.7]  & [ & -12469.3,  & 7452.4]  \\
               $U_C$                    & [ & -177.8, & 114629.0] & [ & -86532.4,  & 119285.9] & [ & -160602.5, & 38498.1] \\
               \bottomrule
            \end{tabular}
            \caption{
               Information summary of the real-world networks used for experimentation.
               \label{tab:realdatasets}
            }
         \end{table}

         \Cref{fig:rewiring} depicts the assessment of the measures of interest on graphs derived from the just indicated networks.
         The former were generated by iterated rewiring (i.e., pairwise edge switching~\cite{milo2003uniform, opsahl2008prominence}) of the original definition of the latter:
         once again 1000 graphs were generated for each number of rewirings considered.
         According to past works, there is a smooth variation of the network characteristics as more rewires are performed.
         This goes until the resulting network can be considered plainly random and, in this sense, stable.
         This could be expected to occur when the number of repetitions is at least within the same order of magnitude of the number of edges.
         \begin{figure}[ht]
            \centering
            \includegraphics{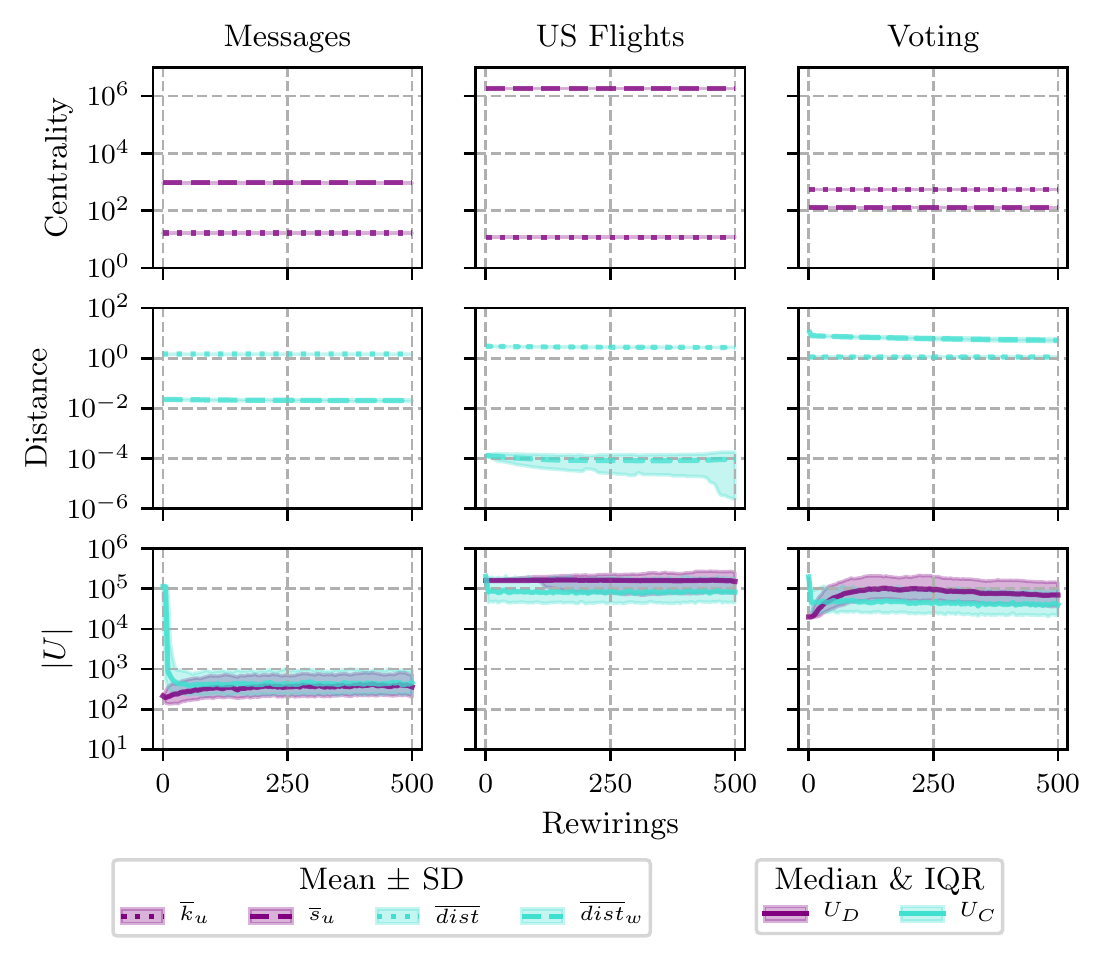}
            \caption{
               Evaluation of weights significance in real-world networks according to the increase randomization of their surrogates.
               \label{fig:rewiring}
            }
         \end{figure}

         While the invariance of degree and strength averages was assured by the rewiring procedure definition, that was not the case for distances.
         Nevertheless these were mildly affected, resembling the just mentioned invariance:
         only the `US Flights' surrogates clearly show a different behavior, as the dispersion of the weighted distances averages increases with the number of rewires.
         There is a similar consistency of $|U_D|$ and $|U_C|$, except after very few rewires:
         $|U_C|$ abruptly decreases in all 3 cases, implying a greater diversity of node closeness profiles;
         on the other hand, $|U_D|$ increases in a softer fashion for the `Messages' and `Voting' networks, evidencing more coinciding node strengths (since degrees are preserved).
         Both changes could be considered unexpected thanks to the small number of rewires which prompted them.

         Finally, \cref{fig:centrality_var} demonstrates one of the advantages of log-based centralities using the real-world networks concerned.
         Since the result reported for the exponential methods was submitted to the log function, what we see as a linear change in the variance of the centralities is, in fact, exponential.
         The log-based measure, however, displayed in linear scale, presents a quadratic behavior of the variance.
         That sounds natural, considering that as $\alpha \to +\infty$ or $-\infty$, the centralities defined according to the log method tend to move away from each other (assuming a bounded useful interval).
         That is not the case for the prod and sum alternatives since, in general, there is a discrepancy of scale in the centrality values comparing both extrema.
         \begin{figure}[ht]
            \centering
            \includegraphics{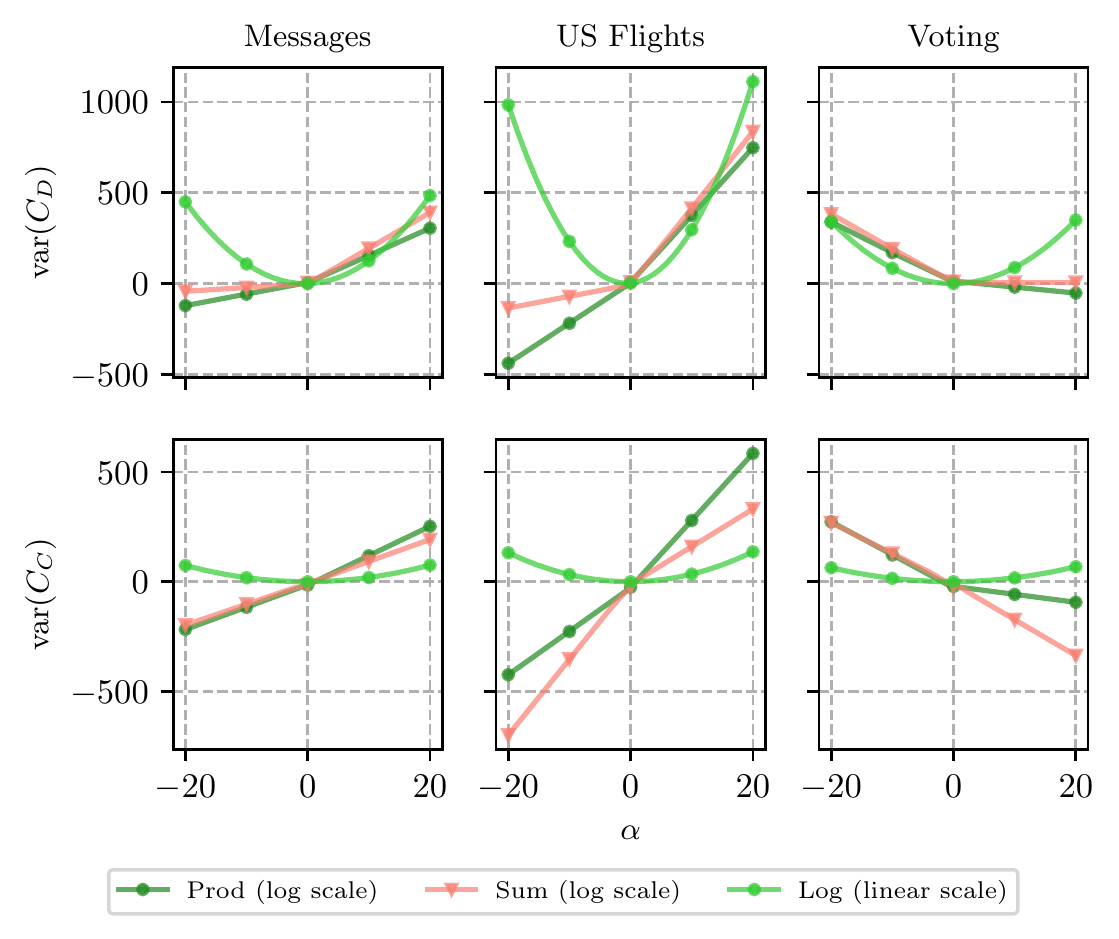}
            \caption{
               Comparison of the centralities variance of real-world networks.
               The effective values for the exponential methods can be obtained by taking $2^{var(C_D)}$ or $2^{var(C_C)}$.
               In practical terms, if the variance tends to grow abruptly on the one side, on the other it becomes almost null rapidly.
               The measures based on the proposer log method do not exhibit this behavior.
               \label{fig:centrality_var}
            }
         \end{figure}

         In addition, the figure also highlights some particularities, as the fact that the voting network presents decreasing variance for the exponential methods.
         This happens because the weights are within the unit interval and, therefore, unweighted degree and closeness exceed, or at least equate, their weighted version.
         The opposite is true for networks whose weights are greater or equal to one.
         Moreover, the curvature of the lines respective to the log method can be related to the length of safe intervals:
         $\text{var}(C_D)$ to $S_D^{prod}$, and $\text{var}(C_C)$ to $S_C^{sum}$.

         %  \mapa{
         %     5a figura, alpha vs desvio-padrão dos valores de centralidade de todos os vértices:
         %       uma coluna pra cada rede;
         %       uma linha para degree centrality, outra para closeness centrality.
         %  }

   \section{Conclusion \label{sec:end}}
      The ubiquity of abstract and concrete systems which can be insightfully modeled as weighted networks is evident.
      This motivates the analysis of such structures from various points of view, such as node centrality.
      This work aimed at improving past ones on the consideration of edge weights for the establishment of the just mentioned perspective.
      Possibly the most repeated contribution in this sense is the proposal of measures to make the assessment in question.
      While this happens once again here, it is not an end in itself, but serves as the gateway to the entire methodology developed from there.
      Moreover, even this starting point is elaborately characterized and compared to its alternatives.

      The conception and the interpretation of the useful intervals is the major accomplishment reported in this text.
      They lead to an entirely unexplored way to analyze weighted networks by allowing to objectively evaluate the significance of edge weights for node centrality.
      And targeting to assure the completeness of this contribution, an elegant procedure for the computation of these intervals is also provided, which is flexible enough to tackle problems in other contexts with similar geometric properties.
      At last, the performed experiments successfully illustrate the behavior of the presented methods in various situations, corroborating their usefulness and uniqueness.

      About the next steps, we believe the application of the proposed methodology on a wider variety of real networks as well as probabilistically generated ones would enable the identification of other interesting patterns related to inherent characteristics of the former and parameter settings of the latter.
      Another future work could focus on the characterization of properties of well-known graph families based on the log-based centrality measures and their respective useful intervals.
      These are some promising potential continuations of this research.

   \section*{CRediT authorship contribution statement}
      We describe contributions to the paper as follows:
      R.S.P and D.O.C should be credited for conceptualization, methodology, software, validation, formal analysis, investigation, data curation, writing -- original draft, review and editing --, and visualization;
      D.O.C also should be credited for resources, supervision, project administration, and funding acquisition.

   \section*{Declaration of Competing Interest}
      The authors declare that they have no known competing financial interests or personal relationships that could have appeared to influence the work reported in this paper.

   %\section*{Acknowledgments}
   \section*{Funding}
      This work was supported by the Carlos Chagas Filho Foundation for Research Support in Rio de Janeiro (FAPERJ) [grant number 248551 to R.S.P.].

      %This research did not receive any specific grant from funding agencies in the public, commercial, or not-for-profit sectors.

   \bibliography{infosci2020_eancwn}
\end{document}